\documentclass[%
reprint, %linenumbers,
 amsmath,amssymb,
 aps, %physrev,
prl
]{revtex4-2}

\usepackage{libpapers}
\usepackage{libgraphics}
\usepackage{booktabs}
\usetikzlibrary{quantikz2}
\usepackage{float}
\usepackage{dcolumn}

\def\I{\mathbb{I}}

\usepackage{pifont} % \ding for check / cross marks
\newcommand{\yes}{\textcolor{green!55!black}{\ding{51}}}
\newcommand{\no}{\textcolor{red!70!black}{\ding{55}}}
\newcommand{\partialmark}{\textcolor{orange!85!black}{\textbullet}}

\newcommand{\gr}[1]{\textcolor{black}{#1}}%{\textcolor{green!50!black}{#1}}

\begin{document}

%\preprint{APS/123-QED}

%\title{\textbf{Quantum eigenvalue transformation for general matrices via quantum signal processing}}%
\title{Quantum Eigenvalue Transformations for Arbitrary Matrices}

\author{Xabier Guti\'errez}
\affiliation{Department of Physical Chemistry, University of the Basque Country UPV/EHU, Apartado 644, 48080 Bilbao, Spain}
\affiliation{EHU Quantum Center, University of the Basque Country UPV/EHU, Apartado 644, 48080 Bilbao, Spain}

\author{Lorenzo Laneve}
\affiliation{Faculty of Informatics --- Università della Svizzera Italiana, 6900 Lugano, Switzerland}

\author{Mikel Sanz}
\affiliation{Department of Physical Chemistry, University of the Basque Country UPV/EHU, Apartado 644, 48080 Bilbao, Spain}
\affiliation{EHU Quantum Center, University of the Basque Country UPV/EHU, Apartado 644, 48080 Bilbao, Spain}
\affiliation{Basque Center for Applied Mathematics (BCAM), Alameda de Mazarredo 14, 48009 Bilbao, Spain}

%\collaboration{CLEO Collaboration}%\noaffiliation

\date{\today}% It is always \today, today,
             %  but any date may be explicitly specified

\begin{abstract}
Quantum Signal Processing (QSP) and Quantum Singular Value Transformation (QSVT) provide an efficient framework for implementing polynomials of block-encoded matrices, and thus offer a systematic approach to quantum algorithm design. However, despite a number of recent advances, important limitations remain. In particular, QSP can only transform unitary matrices, by applying a polynomial to their eigenvalues, while QSVT is a singular-value transformation and thus one can only obtain the polynomial of Hermitian matrices. As a consequence, these techniques do not directly apply to an arbitrary non-Hermitian matrix that is not diagonalizable. In this work, we propose a simple yet powerful method to extend these ideas to arbitrary square matrices by acting on their eigenvalues. To this end, we introduce the notion of an $n$-regular block encoding, namely, a block encoding whose $k$-th power reproduces the $k$-th power of the encoded matrix for every $0 \leq k \leq n$. We show that applying QSP to any unitary with this property is equivalent to applying a polynomial of degree at most $n$ to the block-encoded matrix, independently of its internal structure. Moreover, we provide a simple construction that transforms any block encoding into an $n$-regular one using only $\mathcal{O}(\log n)$ ancillary qubits and $\mathcal{O}(n\log (n/\epsilon))$ operations. Finally, we show that this construction induces the desired transformation on the eigenvalues associated with the Jordan normal form of the matrix\gr{; for non-diagonalizable matrices this yields the full matrix function $P(A)$, whose action on each Jordan block involves the derivatives of $P$, and not merely the map $\lambda \mapsto P(\lambda)$}.
\end{abstract}
% \bb comments cannot appear inside the abstract environment in revtex without risking layout issues, so:
% [Note on abstract, in blue below \maketitle]

%\keywords{Suggested keywords}%Use showkeys class option if keyword
                              %display desired
\maketitle

% quantum algorithms and development techniques
\textit{Introduction\textemdash} Quantum algorithms have the potential to deliver substantial speedups over their classical counterparts, yet until very recently their design had no truly systematic framework and instead relied largely on case-by-case ingenuity. In this context, the method of \emph{quantum signal processing} (QSP)~\cite{lowOptimalHamiltonianSimulation2017,martynGrandUnificationQuantum2021} has emerged as a unifying paradigm, linking algorithm design with techniques from numerical analysis and, in particular, with the implementation of polynomials of matrices encoded through unitary transformations. This perspective enabled a more abstract and systematic approach to algorithm design, and it has already been applied in a variety of settings, including Hamiltonian simulation~\cite{lowOptimalHamiltonianSimulation2017,lowHamiltonianSimulationQubitization2019}, quantum search~\cite{martynGrandUnificationQuantum2021}, quantum linear systems~\cite{childsQuantumAlgorithmSystems2017}, and quantum state preparation~\cite{linNearoptimalGroundState2020,dongGroundStatePreparationEnergy2022,mcardleQuantumStatePreparation2022,laneveRobustBlackboxQuantumstate2023}.

In QSP, one has access to a \emph{signal} $z$---typically a complex number of unit modulus or, depending on the convention, a real number in $[-1,1]$---encoded in a single-qubit unitary known as the \emph{signal} operator. By interleaving applications of this operator with a sequence of operations that is independent of $z$---the \emph{processing} operators---one can construct essentially any polynomial transformation $P(z)=\sum_{k=0}^n a_k z^k$ of the signal, which may, for instance, appear in the upper-left entry of the resulting unitary~\cite{lowMethodologyResonantEquiangular2016,gilyenQuantumSingularValue2019a,motlaghGeneralizedQuantumSignal2024}. Moreover, efficient and numerically stable methods are known for constructing a single-qubit protocol realizing a prescribed polynomial $P$~\cite{alexisQuantumSignalProcessing2024,alexisInfiniteQuantumSignal2026,laneveGeneralizedQuantumSignal2025,niInverseNonlinearFast2025}. By leveraging the spectral theorem, this construction can be lifted so that the scalar signal $z$ is replaced by a unitary operator $U$~\cite{motlaghGeneralizedQuantumSignal2024}. The resulting circuit then implements the non-unitary transformation $P(U)=\sum_{k=0}^n a_k U^k$, thereby applying $P$ directly to the eigenvalues of $U$ in a \emph{black-box} fashion, that is, without the need to explicitly construct or estimate the corresponding eigenstates. Remarkably, this requires only $n+1$ single-qubit operations and $n$ controlled uses of $U$.

On the other hand, when a matrix $A$, not necessarily square, appears in the upper-left block of a unitary $U_A$ that we can implement---a so-called \emph{block-encoding}---one can similarly apply a QSP polynomial to its singular values by a closely related construction, known as \emph{quantum singular value transformation} (QSVT)~\cite{gilyenQuantumSingularValue2019a,sunderhaufGeneralizedQuantumSingular2023}. This method has received particular attention because of its practical usefulness. It is important, however, to note that a polynomial of a square matrix does not, in general, coincide with the polynomial transformation of its singular values.

Despite their rapid development, these techniques still suffer from important limitations. In particular, the existing frameworks can perform either an eigenvalue transformation of a unitary matrix, or a singular value transformation of a block-encoded matrix $A$, but they do not directly provide an eigenvalue transformation of a block-encoded matrix, except in the Hermitian case, where singular values and eigenvalues coincide up to sign. Consequently, for a general non-Hermitian matrix, neither method can be used directly to implement its polynomial transformation.

Several approaches beyond the QSP framework have been proposed to partially bridge this gap, especially the construction of Chebyshev and Faber history states~\cite{lowQuantumEigenvalueProcessing2024}, and the method of \emph{linear combination of Hamiltonian simulations} (LCHS)~\cite{anLinearCombinationHamiltonian2023}, which has been shown to be useful for the simulation of non-unitary dynamics~\cite{huangFourierTransformbasedLinear2025,lowOptimalQuantumSimulation2025}. In particular, the latter approach has been used to block-encode eigenvalue transformations by expressing them as inverse Laplace transforms~\cite{anLaplaceTransformBased2026}. \gr{Recently, a Weyl calculus approach to LCHS has been proposed, extending the class of admissible matrices \cite{niQuantumEigenvalueTransformation2026}.}

However, \gr{these} methods rely on the \emph{linear combination of unitaries} (LCU) technique~\cite{childsHamiltonianSimulationUsing2012}, which predates QSP and exhibits several limitations. First, the complexity of implementing a target transformation $f(z)$ depends strongly on its specific form and typically requires a non-trivial synthesis of the corresponding quantum circuit. Moreover, a linear combination of $N$ terms generally requires an $N$-dimensional register, so the qubit overhead can grow rapidly, especially in the aforementioned approaches, where LCU is used to approximate complicated integrals to a prescribed precision. Another drawback of LCU-based methods is that the resulting block-encoded matrices can suffer from severe subnormalization---that is, instead of obtaining a block-encoding of $A$, one obtains a block-encoding of $A/\alpha$ for some potentially very large $\alpha$. This leads to small success probabilities and, consequently, to further overhead in the overall complexity.

In this Article, we propose a method to carry out transformations of eigenvalues, understood as solutions to the characteristic polynomials of a matrix, of an arbitrary square matrix, even non-diagonalizable ones. In order to do this, we introduce the concept of $n$-\emph{regular} block-encoding as a block-encoding whose $k$-th power reproduces the $k$-th power of the encoded matrix for every $0 \leq k \leq n$, and prove that this kind of block-encoding is always possible by adding only $\bigO(\log n)$ ancilla qubits and \gr{$\bigO(n \log(n/\epsilon) )$ extra two-qubit gates}. We then show that a standard application of QSP yields the eigenvalue transformation when this property holds. This allows us to obtain matrix transformations by remaining in this established framework.

This technique extends the QSP formalism to several applications, such as simulation of general inhomogeneous differential equations and non-unitary dynamics, which arise in a broad range of settings, including dissipative dynamics treated via Carleman linearization~\cite{liuEfficientQuantumAlgorithm2021,kroviImprovedQuantumAlgorithms2023}, partial differential equations~\cite{childsHighprecisionQuantumAlgorithms2021}, and open quantum systems~\cite{weimerSimulationMethodsOpen2021}, among others. \gr{In particular, since the cost of our construction for a matrix $A$ is independent of the eigenvector conditioning $\kappa(S)$, where $A = S J S^{-1}$ is the Jordan decomposition of the matrix, it applies without penalty to defective generators (i.e., with high $\kappa(S)$) at exceptional points, a regime of intense current interest in non-Hermitian photonics, PT-symmetric and open quantum systems that is out of reach of previous techniques.}

\textit{Block-encodings and (generalized) quantum signal processing \textemdash}
Here we summarize the main definitions and results that we are going to use in the rest of the work. Let us start by clarifying the notation: we will use $\D$ to denote the complex unit disk $\D = \qty{ z \in \C : \abs{z} < 1 }$, while $\T = \partial \D$ will be the unit circle. Additionally, if not said otherwise, $\norm{\cdot}$ refers to the operator norm. Let us now provide the definition of a block encoding:

% Definition of block-encoding
\begin{definition}
    A unitary $U$ is said to ($a$, $\epsilon$)-\emph{block-encode} a matrix $A$ if
    \begin{align}
        \norm{ (\bra{0}^{\otimes a} \otimes I) U (\ket{0}^{\otimes a} \otimes I) - A } \le \epsilon,
    \end{align}
with $a$ the number of ancillary qubits needed for the block-encoding. In other words, the unitary is of the form
\begin{align}
    U =
    \begin{pmatrix}
        \Tilde{A} & \cdot \\
        \cdot & \cdot
    \end{pmatrix}
\end{align}
for some $\Tilde{A}$ satisfying $\norm{\Tilde{A} - A} \leq \epsilon$.
\end{definition}
Notice that any unitary $(0,0)$-block-encodes itself. Moreover, any block-encoded matrix $A$ must necessarily satisfy $\norm{A} \le 1$ in order for the unitarity condition to hold for $U$.

Within the block-encoding framework, we aim to identify a set of operations that can be used to compose block-encodings and thereby obtain more \emph{useful} matrices from those that are easier to construct. A powerful tool for this purpose is \emph{generalized quantum signal processing} (GQSP). Throughout, $\Tilde{w}$ denotes the signal operator, defined by $\Tilde{w} = \diag(1,z)$, which encodes a complex number $z \in \mathbb{T}$.

\begin{theorem}[GQSP theorem, adapted from~\cite{motlaghGeneralizedQuantumSignal2024}]
    Let $P(z)$ be a polynomial in $z$ of degree $n$ satisfying $\abs{P(z)} \le 1$ on $\mathbb{T}$. There exists a sequence of single-qubit operations $R_0, R_1, \ldots, R_n \in SU(2)$ such that
    \begin{align}
        \bra{0} R_0 \Tilde{w} R_1 \Tilde{w} \cdots \Tilde{w} R_n \ket{0} = P(z)
    \end{align}
\end{theorem}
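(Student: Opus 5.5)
The plan is to prove the GQSP theorem by a layer-stripping (downward induction on the degree) argument, after first completing $P$ to a unitary. Since $\abs{P(z)}\le 1$ on $\T$, the function $1-\abs{P(z)}^2$ is a nonnegative Laurent polynomial on $\T$ of degree $n$. By the Fejér--Riesz theorem there is a polynomial $Q$ of degree at most $n$ with $\abs{Q(z)}^2 = 1-\abs{P(z)}^2$ on $\T$. We then look for a product $R_0\Tilde{w}R_1\cdots\Tilde{w}R_n$ whose first row is $(P(z), Q(z))$, or equivalently whose first column is $(P,Q)^{T}$ up to the convention chosen. Our claim is that any $2\times 2$ matrix of the form
\begin{align}
    M(z)=\begin{pmatrix} P(z) & \cdot \\ Q(z) & \cdot\end{pmatrix}
\end{align}
that is unitary on $\T$ and polynomial of degree $\le n$ in $z$ admits such a factorization.

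The induction proceeds as follows. For $n=0$, $(P,Q)$ is a constant unit vector, so it is the first column of some matrix in $U(2)$. Write $P(z)=\sum_k p_k z^k$ and $Q(z)=\sum_k q_k z^k$. Unitarity on $\T$, i.e.\ $\abs{P}^2+\abs{Q}^2=1$, forces the coefficient of $z^n$ to vanish, so $p_n\bar q_0 + q_n \bar p_0 \cdot$(appropriate conjugation) $=0$; concretely $\bar p_0 p_n + \bar q_0 q_n = 0$. Hence the vectors $(p_0,q_0)$ and $(p_n,q_n)$ are orthogonal. We choose $R\in SU(2)$ whose rows are aligned with these directions, so that $R^\dagger (P,Q)^T = (P', z Q'')^T$, with the first component of degree $\le n-1$ and the second divisible by $z$. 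Then $\Tilde{w}^\dagger R^\dagger(P,Q)^T = (P', Q'')^T$ has degree $\le n-1$ on $\T$, where $\bar z = z^{-1}$. It remains unitary-completable, so the induction hypothesis applies.

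Reassembling the peeled factors gives $(P,Q)^T = R_0\Tilde{w}R_1\cdots\Tilde{w}R_n\ket{0}$. Taking the $\bra{0}$ component yields $P(z)$. Phases are absorbed so that each $R_j\in SU(2)$, using global-phase freedom on the last factor if necessary.

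The main obstacle is the rotation step when one of $(p_0,q_0)$ or $(p_n,q_n)$ vanishes, which can happen when the actual degree is less than $n$. In that case we choose $R$ using whichever vector is nonzero, or use $R=I$ and simply pad. The Fejér--Riesz step itself we take as a classical result.
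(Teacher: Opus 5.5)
Your argument is correct and is essentially the standard proof of Motlagh and Wiebe that the paper cites instead of reproving. You complete $P$ via Fej\'er--Riesz to a pair $(P,Q)$ with $\abs{P}^2+\abs{Q}^2=1$ on $\mathbb{T}$, then use the top-coefficient identity $\bar p_0 p_n+\bar q_0 q_n=0$ to strip off one $R\,\Tilde{w}$ layer at a time, with the degenerate cases handled as you describe.

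The one slip is that your intermediate claim holds for the first column, not for the full matrix $M(z)$. For example, with $n=1$, $M=I$ is not of the form $R_0\Tilde{w}R_1$, whose determinant is $z$. This does not affect the proof, because your induction only uses the column statement $(P,Q)^T=R_0\Tilde{w}R_1\cdots\Tilde{w}R_n\ket{0}$.
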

Moreover, it is always possible, without loss of generality, to choose $R_1, \ldots, R_n$ to be a product of only two Pauli rotations (e.g., $X$ and $Z$), whose angles are now easily computed with numerical algorithms~\cite{alexisQuantumSignalProcessing2024,alexisInfiniteQuantumSignal2026,laneveGeneralizedQuantumSignal2025,niInverseNonlinearFast2025}.

Although previous works have only considered GQSP on the unit circle $\mathbb{T}$, any analytic function $f(z)$ satisfying $\abs{f(z)} \le 1$ on $\mathbb{T}$ automatically satisfies the same bound on $\mathbb{D}$, by virtue of the maximum modulus principle. This shows that, using existing techniques, we are already able to approximate any function $f(z)$ that is analytic and bounded by $1$ on the unit \emph{disk}.

A particularly simple construction enabled by GQSP is the transformation of the eigenvalues of a unitary. Specifically, if $U$ is a $d \times d$ unitary, then its controlled version has the block-diagonal form
\[
C(U) = \diag(I_d, U),
\]
which closely resembles $\Tilde{w}$. In particular, by applying the $R_k$ to the control qubit and replacing $\Tilde{w}$ with $C(U)$, the resulting circuit would simply block-encode $P(U)$, that is, the matrix obtained by mapping each eigenvalue of $U$ through the transformation $z \mapsto P(z)$.

%\section{Regular block-encodings and regularization by `dumping branches'}
\textit{Regular block-encodings and regularization by `dumping branches' \textemdash} Current techniques based on quantum signal processing allow one to transform the eigenvalues of a unitary (as in the construction above) or the singular values of general block-encoded matrices (using QSVT), but no construction based on GQSP was previously known to successfully carry out \emph{eigenvalue} transformations of generic block-encoded square matrices (except in the case of Hermitian matrices, where eigenvalue and singular value decompositions coincide and QSVT can be applied).

Unfortunately, directly performing an eigenvalue transformation $U \mapsto P(U)$ as we did above does not yield a block-encoding of $P(A)$, because, in simple terms, $U^k$ does not generally block-encode $A^k$. Let us formally define this property:

% Definition of 'n-regular' block-encoding
\begin{definition}
    An ($a$, $\epsilon$)-block-encoding $U$ of $A$ is called $n$-regular if $U^k$ ($a$, $k\epsilon$)-block-encodes $A^k$ for every $0 \le k \le n$.
\end{definition}
The bound on the error implied by the definition simply comes from the bound $\norm{A^k - B^k} \le k \norm{A - B}$. These
$n$-regular block encodings show the following crucial property:
\begin{lemma}
    \label{thm:eigenvalue-transformation-robustness}
    Let $P(z)$ be any polynomial of degree $n$ satisfying $\abs{P(z)} \le 1$ on $\mathbb{D}$. If $U$ is a $n$-regular ($a$, $\epsilon$)-block-encoding of $A$, then $P(U)$ is a ($a$, $\sqrt{n^3/3} \epsilon$)-block-encoding of $P(A)$.
\end{lemma}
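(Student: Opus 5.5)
The natural route is to expand $P$ in monomials and bound the error term by term using $n$-regularity. Write $P(z)=\sum_{k=0}^n a_k z^k$. By linearity,
\begin{align}
(\bra{0}^{\otimes a}\otimes I)P(U)(\ket{0}^{\otimes a}\otimes I)=\sum_{k=0}^n a_k \tilde A_k ,
\end{align}
where $\tilde A_k := (\bra{0}^{\otimes a}\otimes I)U^k(\ket{0}^{\otimes a}\otimes I)$. By the definition of an $n$-regular block-encoding, $\|\tilde A_k - A^k\|\le k\epsilon$ for every $0\le k\le n$; in particular the $k=0$ term is exact. The triangle inequality then gives
\begin{align}
\Big\|\sum_k a_k\tilde A_k - P(A)\Big\| \le \epsilon\sum_{k=1}^n k\,|a_k| .
\end{align}

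The second step controls the coefficients. Since $|P|\le 1$ on $\mathbb{D}$, and hence on $\mathbb{T}$ by continuity, Parseval's identity on the circle gives
\begin{align}
\sum_{k=0}^n|a_k|^2=\frac{1}{2\pi}\int_0^{2\pi}|P(e^{i\theta})|^2\,d\theta\le 1 .
\end{align}
Cauchy--Schwarz then yields
\begin{align}
\sum_{k=1}^n k|a_k|\le\Big(\sum_{k=1}^n k^2\Big)^{1/2}=\sqrt{n(n+1)(2n+1)/6}.
\end{align}

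The main obstacle is the last step: replacing $\sum_{k\le n} k^2$ by $n^3/3$. This cannot be done in general, because $\sum_{k=1}^n k^2 = n^3/3+n^2/2+n/6 > n^3/3$ for every $n\ge 1$.

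My first idea was to sharpen the estimate by exploiting that not all $|a_k|$ can be large simultaneously, i.e.\ by optimizing $\sum_k k|a_k|$ under the constraint $\sup_{\mathbb{T}}|P|\le 1$ rather than the weaker $\ell^2$ constraint. The case $n=1$ shows that no such argument can reach the stated constant. Take $P(z)=z$. Then $P(U)=U$, which by hypothesis block-encodes $A$ with error exactly up to $\epsilon$, and the definition allows this error to be attained. The stated bound would instead require error at most $\epsilon/\sqrt3$.

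So I would prove the lemma with constant $\sqrt{n(n+1)(2n+1)/6}=\sqrt{n^3/3}\,(1+O(1/n))$. I would flag that the literal constant $\sqrt{n^3/3}$ is not achievable as stated: the $n=1$ counterexample above rules it out. It should be read as the leading-order form of this exact bound.
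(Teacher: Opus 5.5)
Your proof is correct and is the same argument as the paper's: expand $P$ in monomials, bound the $k$-th term by $k\epsilon\,|a_k|$ using $n$-regularity, then apply Parseval and Cauchy--Schwarz to get $\sqrt{n(n+1)(2n+1)/6}$. The paper itself only writes this as $\sim\sqrt{n^3/3}$, so your reading of the constant as leading-order, and your $n=1$ counterexample to the literal constant, are consistent with what the paper actually proves; your first step, working directly with the top-left block of $U^k$, is also slightly cleaner than the paper's assertion that $P(U)$ block-encodes $P(\tilde A)$ exactly, which the definition of $n$-regularity does not guarantee when $\epsilon>0$.
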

The proof of this lemma is given in the \gr{Supplementary Material section A}, while a worked example on a 2-regular block-encoding is found in \gr{section B}. This regularity notion is not particularly difficult to achieve, in the sense that any block encoding can be easily transformed into an $n$-regular block-encoding via the simple construction shown in the following theorem:
\begin{theorem}
    \label{thm:block-encoding-regularization}
    Let $U_A$ be an $(a, \epsilon)$-block-encoding of $A$, and suppose $n = 2^b$. Let $Q_n$ denote the quantum incrementer, defined by
\begin{align}
    Q_n \ket{x} = \ket{(x + 1) \bmod n}.
\end{align}
Then, the following circuit yields an $n$-regular $(b + a, \epsilon)$-block-encoding of $A$:
    \begin{center}
        \begin{quantikz}[wire types={b, b, b}, classical gap=0.06cm]
            & \gate{Q_n} & \gate{Q_n^\dag} & \\
            & \gate[2]{U_A} & \octrl{-1} & \\
            & & &
        \end{quantikz}
    \end{center}
\end{theorem}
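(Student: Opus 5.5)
Let $W$ denote the circuit: $W = (Q_n^\dagger\text{ controlled on the }a\text{-register being }\ket{0^a}) \cdot (Q_n \otimes U_A)$, where the new counter register has $b$ qubits and the ancilla register of the overall block-encoding is the $(b+a)$-qubit register $\ket{0^b}\ket{0^a}$. The plan is to show by induction on $k$ that, for $0\le k\le n$,
\begin{align}
    W^k \ket{0^b}\ket{0^a}\ket{\psi} = \ket{0^b}\ket{0^a}\tilde{A}^k\ket{\psi} + \ket{\Phi_k},
\end{align}
where $\ket{\Phi_k}$ lies in the span of states whose counter register is $\ket{j}$ with $1\le j\le k$ (so in particular orthogonal to $\ket{0^b}$ when $k<n$, and also for $k=n$ provided we check wrap-around). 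Projecting onto $\bra{0^b}\bra{0^a}$ then gives exactly $\tilde{A}^k$, and $\norm{\tilde{A}^k - A^k}\le k\epsilon$ follows from the telescoping bound quoted after the definition of $n$-regularity, together with $\norm{\tilde A},\norm{A}\le 1$ (for $A$ one may assume this, or absorb it in the error statement as the paper does).

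The key single-step computation: apply $W$ to a state $\ket{j}\ket{\phi}$ with $\ket{\phi}$ on ancilla$\otimes$system. First $Q_n\otimes U_A$ gives $\ket{j+1}U_A\ket{\phi}$. Decompose $U_A\ket{\phi} = \ket{0^a}\chi + \ket{\perp}$, where $\ket{\perp}$ has ancilla orthogonal to $\ket{0^a}$. The controlled $Q_n^\dagger$ acts on the first part, returning the counter to $\ket{j}$, and leaves the second part at counter $j+1$. So the ``good'' branch (ancilla $0^a$) keeps the counter fixed while the ``bad'' branch is dumped with counter increased by one. Starting from $\ket{0}\ket{0^a}\ket{\psi}$, one step gives $\ket{0}\ket{0^a}\tilde A\ket{\psi}$ plus a branch at counter $1$. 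Inductively, the good component stays at counter $0$ with the system carrying $\tilde A^k\ket\psi$, since $\bra{0^a}U_A\ket{0^a}=\tilde A$.

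The main obstacle is controlling the garbage branches: once a branch has counter $j\ge1$, subsequent applications may move it back down (when its ancilla happens to be $\ket{0^a}$ after $U_A$, the counter returns to $j$... rather, it stays at $j$) or up by one, but never down. I will verify precisely this monotonicity: each application of $W$ maps counter value $j$ to either $j$ or $j+1$ modulo $n$. Hence after $k$ steps every garbage branch has counter in $\{1,\dots,k\}$ modulo $n$, and it can only return to $\ket{0}$ by wrapping around, which requires $n$ increments starting from $1$, i.e.\ at least $n+1$ steps total. Thus for $k\le n$ the counter-$0$ component is exactly the good branch, which I also must confirm has ancilla $\ket{0^a}$ only (it does, since the counter remains $0$ only through the ancilla-$0^a$ branch). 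This establishes that $(\bra{0^{b}}\bra{0^a}\otimes I)W^k(\ket{0^b}\ket{0^a}\otimes I)=\tilde A^k$ for all $0\le k\le n$ (the case $k=0$ being trivial), completing the proof of $n$-regularity with $b+a$ ancillas.
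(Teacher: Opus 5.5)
\textbf{There is a genuine gap at $k=n$.} Your strategy is the same as the paper's: induct on $k$, note that the ancilla-$\ket{0^a}$ branch keeps the counter fixed while every other branch is pushed from $j$ to $j+1$, and observe that counters never decrease. This is correct for $k<n$. The problem is the boundary case $k=n$, which is exactly where the theorem is tight.

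Your count ``wrap-around requires $n$ increments starting from $1$, i.e.\ at least $n+1$ steps total'' is off by one. From counter $1$ only $n-1$ increments are needed to reach $n\equiv 0 \pmod{n}$. So the branch that fails at each of the first $n$ applications is back at counter $\ket{0^b}$ after exactly $n$ steps. Your own set $\{1,\dots,k\}$ reduced mod $n$ already contains $0$ when $k=n$. Hence, at $k=n$, two of your claims are false: that $\ket{\Phi_n}$ is orthogonal to $\ket{0^b}$, and that the counter-$0$ component is exactly the good branch with ancilla $\ket{0^a}$.

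Concretely, take $n=2$ and write $U_A$ with blocks $A,B$ in the top row and $C,D$ in the bottom row. Two applications of $W$ map $\ket{0}_C\ket{0}_O\ket{\psi}$ to a state containing $\ket{0}_C\ket{1}_O\,DC\ket{\psi}$. A quick sanity check also exposes the error: if wrap-around really took $n+1$ steps, your argument would prove $(n+1)$-regularity. Yet the top-left block of $W^3$ is $A^3+BDC\neq A^3$.

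\textbf{The missing idea.} A branch can reach counter $0$ by wrapping only through an increment that the controlled $Q_n^\dagger$ did \emph{not} undo. That happens exactly on the component whose original ancilla is orthogonal to $\ket{0^a}$. So at $k=n$ the wrapped branch has counter $\ket{0^b}$ but ancilla orthogonal to $\ket{0^a}$. It is therefore annihilated by $\bra{0^b}\bra{0^a}$, and the block is still exactly $\tilde A^n$; this is the paper's remark that ``the transition to zero on $C$ means that $O$ is non-zero.''

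\textbf{Corrected invariant.} State it as follows:
\begin{itemize}
\item for $k<n$, the garbage has counter in $\{1,\dots,k\}$;
\item for $k=n$, the garbage has counter in $\{1,\dots,n-1\}$, or counter $0$ with ancilla orthogonal to $\ket{0^a}$.
\end{itemize}
The second component is precisely what $U_A$ can rotate back into $\ket{0^a}$ at step $n+1$. That is why the construction is $n$-regular but not $(n+1)$-regular.
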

In other words, the circuit first applies $A$ and then increments the newly introduced $b$-qubit register by~$1$ whenever the ancillary qubits are not in the all‑zero state. The underlying idea is the following: when we apply a block-encoding, we effectively implement the mapping
\begin{align}
    \ket{0}^{\otimes a} \ket{\psi}
    \mapsto
    \ket{0}^{\otimes a} \otimes A \ket{\psi}
    + \sum_{j \neq 0} \ket{j} \otimes \ket{\psi^{\perp}_j},
\end{align}
for some garbage states $\ket{\psi^{\perp}_j}$. This implies that, within the ``successful'' subspace (i.e., the subspace on which $A$ is applied), the associated unitary $U$ leaves $\ket{0}^{\otimes a}$ invariant, whereas all other terms map the ancillary register to non‑zero configurations. Consequently, if $U$ is applied a second time, some of these non‑zero configurations may be mapped back to $\ket{0}^{\otimes a}$, thereby contaminating the successful branch and resulting in the top‑left block encoding $A^2$ plus an additional garbage term. With this construction, however, the incrementer causes the new $b$-qubit register to end up in a non‑zero configuration; thus, even if subsequent calls to the block‑encoding map the original ancillas back to $\ket{0}^{\otimes a}$, the overall main branch associated with $\ket{0}^{\otimes b+a}$ remains clean. This remains true until $Q_n$ is applied $n$ times, after which some branch (in particular, the one that has failed $n$ times) is mapped from $\ket{1}^{\otimes b}$ back to $\ket{0}^{\otimes b}$. This is still acceptable, since the original ancillas are not in the state $\ket{0}^{\otimes a}$; in other words, this yields an $n$-regular block‑encoding, but not an $(n+1)$-regular one, because a branch of the form $\ket{0}^{\otimes b} \ket{k}^{\otimes a}$ can contaminate the main branch. A formal proof is provided in the \gr{Supplementary Material section C}.

Note that, within this approach, standard techniques for multiplying block‑encodings~\cite{fangTimemarchingBasedQuantum2023,dalzellQuantumAlgorithmsSurvey2025} can be directly applied. \gr{The incrementer requires only $\bigO(b) = \bigO\left(\log n\right)$ ancillary qubits and can be diagonalised by an approximate quantum Fourier transform~\cite{nielsenQuantumComputationQuantum2010}, one at the beginning and another one at the end of the circuit, with a cost of $\bigO\left(b\log b\right) = \Tilde{\bigO}\left(\log n \right)$ two-qubit gates. Taking into account each of the $n$ controlled rotations, from which $\log(n/\epsilon)$ are retained, the overall two-qubit gates would be $\mathcal{O}\left(n\log(n/\epsilon)\right)$.}

\textit{Quantum eigenvalue transformation of arbitrary matrices \textemdash} We can thus apply the construction from the previous section to obtain an eigenvalue transformation of~$A$.
\begin{theorem}[Quantum eigenvalue transformation]\label{thm:qet}
    Let $P(z)$ be a polynomial of degree~$n$ and let $A$ be an arbitrary square matrix. Then a block-encoding of $P(A)$ can be obtained using $\bigO(\log n)$ ancillary qubits, $\bigO(n \log n)$ elementary operations, and $n$ calls to a block‑encoding of~$A$.
\end{theorem}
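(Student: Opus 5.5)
The plan is to compose the two preceding results: first regularize the block-encoding using Theorem~\ref{thm:block-encoding-regularization}, then apply GQSP in the controlled-unitary form described above, and finally identify the resulting top-left block via Lemma~\ref{thm:eigenvalue-transformation-robustness}.

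\emph{Step 1: normalization and regularization.} Assume, after rescaling, that $\abs{P(z)}\le 1$ on $\mathbb{D}$; in general one implements $P/\max_{\overline{\mathbb{D}}}\abs{P}$. Assume also that $\norm{A}\le 1$, as any block-encoded matrix must satisfy. Let $b=\lceil\log_2 n\rceil$ and $N=2^b\ge n$. Starting from an $(a,\epsilon)$-block-encoding $U_A$, Theorem~\ref{thm:block-encoding-regularization} gives a unitary $V=(C_{\neq 0}Q_N^\dagger)(Q_N\otimes U_A)$ that is an $N$-regular, hence $n$-regular, $(a+b,\epsilon)$-block-encoding of $A$. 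Here $C_{\neq 0}$ denotes control on the $a$ ancillas being nonzero. This step costs $b=\bigO(\log n)$ extra qubits and one call to $U_A$.

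\emph{Step 2: GQSP on $V$.} Take the phases $R_0,\dots,R_n$ from the GQSP theorem for $P$. Add one control qubit and form $R_0\,C(V)\,R_1\,C(V)\cdots C(V)\,R_n$. Expanding $C(V)=\ket0\bra0\otimes I+\ket1\bra1\otimes V$ shows that the $\bra0\cdot\ket0$ block on the control qubit equals $\sum_k c_k V^k=P(V)$, exactly as in the unitary case. The only algebraic fact used is that $C(V)^k=\diag(I,V^k)$.

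\emph{Step 3: identify the block.} Project $P(V)$ onto $\ket{0}^{\otimes(a+b)}$. By Lemma~\ref{thm:eigenvalue-transformation-robustness}, $P(V)$ is an $(a+b,\sqrt{n^3/3}\,\epsilon)$-block-encoding of $P(A)$. Including the GQSP qubit, the whole circuit is therefore an $(a+b+1,\sqrt{n^3/3}\,\epsilon)$-block-encoding of $P(A)$. Since this is a polynomial identity in $A$, no diagonalizability is needed. On a Jordan block the result is the usual matrix function, involving derivatives of $P$, which can be remarked upon afterwards.

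\emph{Step 4: resource count.} GQSP makes $n$ controlled calls to $V$, and hence $n$ calls to a controlled $U_A$. It also uses $n+1$ single-qubit rotations. Each $V$ uses two incrementers on $b$ qubits. These can be implemented with $\bigO(b)$ gates and ancillas by a ripple-carry construction. Alternatively, conjugate by a QFT and note that the QFT factors cancel between consecutive calls, so that only $\bigO(b\log b)$ overhead plus a diagonal phase per call remains. Each controlled increment then contributes $\bigO(\log n)$ gates, for $\bigO(n\log n)$ in total. Ancillas number $b+1+\bigO(\log n)=\bigO(\log n)$ beyond the $a$ already inside the block-encoding.

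I expect the main obstacle to be the gate count. A naive controlled incrementer or multi-controlled-on-nonzero gate costs $\bigO(b)$ or $\bigO(a)$ per call. One must therefore argue that the $(a)$-qubit ``nonzero'' test is charged to the block-encoding call, or counts as $\bigO(a)$ with $a$ treated as fixed. One must also check that the QFT-diagonalized incrementers cancel between successive $C(V)$ applications, since the control qubit only acts on the $U_A$ and increment parts. I would first try the ripple-carry incrementer with $\bigO(b)$ gates per call, which gives $\bigO(n\log n)$ directly without approximation.
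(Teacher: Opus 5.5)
Your argument follows the paper's route. You regularize $U_A$ with the incrementer of Theorem~\ref{thm:block-encoding-regularization}, rounding $n$ up to a power of two. You then run GQSP with controlled calls to the regularized unitary as in Figure~\ref{fig:quantum-eigenvalue-transformation-regularized}, and read off $P(A)$ from Lemma~\ref{thm:eigenvalue-transformation-robustness}. For the gate count the paper uses your alternative: incrementers diagonalized by a QFT, with the Fourier transforms cancelling between calls.

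One slip needs fixing. You define $C_{\neq 0}$ as ``original ancillas nonzero'', so your $V=(C_{\neq 0}Q_N^\dagger)(Q_N\otimes U_A)$ leaves the counter incremented on the success branch:
\[
V\ket{0}_C\ket{0}_O\ket{\psi}=\ket{1}_C\ket{0}_O A\ket{\psi}+\ket{0}_C\sum_{l\neq 0}\ket{l}_O\ket{\beta_l}.
\]
Its top-left block is therefore $0$, not $A$. The paper's circuit applies $Q_n^\dagger$ under an open control, i.e.\ when the original ancillas are all zero. Only failed branches are then pushed out of $\ket{0}_C$. With that polarity your proof goes through. Your caveat that the $a$-qubit zero test must be charged to the block-encoding call is a point the paper leaves implicit.
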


\begin{figure*}
    \begin{center}
        \begin{quantikz}[wire types={q, b, b, b}, classical gap=0.06cm]
            & \gate{R_0} & \ctrl{2} & \ctrl{1} & \gate{R_1} & \ctrl{2} & \ctrl{1} & \ \cdots\  & \ctrl{2} & \ctrl{1} & \gate{R_n} & \\
            && \gate{Q_n} & \gate{Q_n^\dag} & & \gate{Q_n} & \gate{Q_n^\dag} & \ \cdots\  & \gate{Q_n} & \gate{Q_n^\dag} & & \\
            && \gate[2]{U_A} & \octrl{-1} & & \gate[2]{U_A} & \octrl{-1} & \ \cdots\ & \gate[2]{U_A} & \octrl{-1} & & \\
            && & & & & & \ \cdots\  & & &&
        \end{quantikz}
    \end{center}
    \caption{Circuit for quantum eigenvalue transformation of general matrices based on (G)QSP. If $R_0, \ldots, R_n$ are the processing operators realizing some polynomial $P(z)$, then the above circuit implements a block-encoding of $P(A)$, where $A$ is block-encoded by $U_A$.}
    \label{fig:quantum-eigenvalue-transformation-regularized}
\end{figure*}

A full circuit implementing a polynomial transformation of a given matrix~$A$ is depicted in Figure~\ref{fig:quantum-eigenvalue-transformation-regularized}. Notice that, although we use the term \emph{eigenvalue transformation}, we do not even need to assume $A$ is diagonalizable, since we do not rely on the spectral mapping theorem, unlike previous constructions~\cite{lowHamiltonianSimulationQubitization2019, gilyenQuantumSingularValue2019a}. The polynomial transformation achieved by this construction can be better understood via the Jordan decomposition of the matrix, $A = S J S^{-1}$, where $J = \bigoplus_i J_i$. This decomposition remains valid even for non‑diagonalizable matrices. Since any polynomial of $A$ can be written as $P(A) = S P(J) S^{-1}$, with $P(J) = \mathrm{diag}\bigl(P(J_i)\bigr)$, the transformation amounts to a polynomial transformation of the Jordan blocks. In the case when $A$ is diagonalizable, each $J_i$ reduces to the eigenvalue $\lambda_i$, and the transformation coincides with the polynomial transformation of the eigenvalues. \gr{On the other hand, in the case of a non-trivial Jordan block, $P(J_i)$ is the upper-triangular Toeplitz matrix carrying $P^{(k)}(\lambda_i)/k!$ on its $k$-th superdiagonal (see Supplementary Material section C); throughout this Article, ``eigenvalue transformation'' is to be understood in this Jordan sense, i.e., as the implementation of the full matrix function $P(A)$, and not merely of the map $\lambda_i \mapsto P(\lambda_i)$.}

\begin{table*}[t]
\centering
\renewcommand{\arraystretch}{1.3}
\setlength{\tabcolsep}{3pt}
\footnotesize
\resizebox{\textwidth}{!}{%
\begin{tabular}{|c|c|c|c|c|c|c|c|}
\toprule
\textbf{Method} &
\textbf{Spectral object} &
\textbf{Matrix class} &
\textbf{Defective} &
\textbf{Queries to $U_A$} &
\textbf{Extra ancillas} &
\textbf{Success amplitude} &
\textbf{Function class} \\
\midrule
QSVT \cite{gilyenQuantumSingularValue2019a} &
singular values &
general (block-encoded) &
\no &
$n$ &
$\bigO(1)$ &
$\bigO(1)$ (set by $\norm{p}_\D$) &
real definite-parity \\

QET-U \cite{motlaghGeneralizedQuantumSignal2024, dongGroundStatePreparationEnergy2022} &
eigenvalues &
unitary $U$ only &
\no &
$n$ &
$1$ &
$\bigO(1)$ &
any $\abs{P}\le1$ on $\T$ \\

Faber history states~\cite{lowQuantumEigenvalueProcessing2024} &
eigenvalues &
non-normal, spectrum in region $\Omega$ &
\partialmark &
$\bigO(n)$ &
$\bigO(\log n)$ (history reg.) &
Faber-capacity of $\Omega$; bounds $\propto\kappa(S)$ &
Faber series on $\Omega$ \\

Lap-LCHS  \cite{anLaplaceTransformBased2026}&
eigenvalues &
$L=\tfrac{A+A^\dagger}{2}\succeq0$ (stable) &
\partialmark &
$\widetilde{\bigO}(\alpha \norm{g}T\log^{1+o(1)}\tfrac1\varepsilon)$&
$\bigO(\log K+\log T)$, $K{=}\bigO(\log^{1+1/\beta}\tfrac1\varepsilon)$  &
$\propto 1/\norm{f}_1\norm{g}_1$; state-prep optimal &
inv.\ Laplace $g$ via LCU \\

Weyl-LCHS  \cite{niQuantumEigenvalueTransformation2026}&
eigenvalues  &
arbitrary square &
\yes &
 $\mathcal{O}(\frac{B}{\rho}\log \frac{B}{\rho \epsilon})$,&
$\bigO(\log \log 1/\epsilon)$&
$\propto 1/\norm{f}_1$ &
any $f$ analytic around $W(A)$ + margin $\rho$ \\

\textbf{This work} (GQSP $+$ $n$-regular BE) &
eigenvalues / \textbf{full Jordan} &
\textbf{arbitrary square} &
\yes &
$n$ &
$\bigO(\log n)$ &
$\bigO(1)$ if $\abs{f}\!\le\!\text{const on }\D$%;
&
any $f$ bounded analytic on $\D$ \\
\bottomrule
\end{tabular}%
}
\caption{Comparison of spectral-processing methods. ``Defective'' indicates
whether a non-diagonalizable input is admissible with cost \emph{independent of
the eigenvector conditioning} $\kappa(S)$: \yes\ yes; \partialmark\ formally
representable but with hypotheses/bounds that degrade as $\kappa(S)\!\to\!\infty$
(Low \& Su) or that are violated for indefinite Hermitian part / unstable
spectra (Lap-LCHS); \no\ not applicable. Here $f$ is the polynomial function, $n$ is the polynomial degree,
$\alpha$ the block-encoding normalization, $K$ the
number of LCHS quadrature nodes and  $T$ a truncation parameter determined by the decay of the inverse Laplace $g$, $B \geq \max_{\norm{z}\leq 1+\rho}\{\abs{f(z)},\abs{f'(z)},\abs{f''(z)}\}$, $W(A)$ the numerical range of $A$ and $\rho$ is the margin between $W(A)$ and the nearest point where $f$ ceases to be analytical. The success amplitude is the inverse
subnormalization (probability $=$ amplitude$^2$). }
\label{tab:comparison}
\end{table*}

\textit{Examples \textemdash} 
As a first example, we consider the matrix exponential $e^A$. We simply implement the function $f(z) = e^z/3$ via GQSP, which is analytic and satisfies $\abs{f(z)} \le e/3 < 1$ on the unit disk $\mathbb{D}$. The Taylor coefficients $c_k = z^k/k!$ of the exponential decay factorially on $\mathbb{D}$, so that
\begin{align}
    \abs{ e^z - \sum_{k = 0}^N \frac{z^k}{k!} } \le \epsilon
\end{align}
holds on $\mathbb{D}$ already for $N = \mathcal{O}\left(\frac{\log 1/\epsilon}{\log \log 1/\epsilon}\right)$. \gr{Note that the subnormalization is the \emph{constant} $1/3$, independent of the degree and of the precision, so the success amplitude is $\Theta(1)$; in LCU-based implementations the subnormalization is instead set by the $1$-norm of the coefficients of the linear combination.}

In the short‑time simulation regime (i.e., implementing $e^{tA}$ for $t = \bigO(1)$), this method requires fewer copies of $A$ than the LCHS method for simulating time‑independent homogeneous linear differential equations presented in~\cite{lowOptimalQuantumSimulation2025}, and it can be used to address unstable evolutions as well. \gr{The number of ancillary qubits is also significantly lower, see Table \ref{tab:comparison}.}

%\XG{Maybe it is simpler to mention the damped oscillator as a specific example of the matrix exponential}\MS{Agree! See my green edit below introduces it as ``a physically relevant instance of the previous example'', which keeps both while making the link explicit and saving a transition sentence.}
\gr{As a second example, providing a physically relevant instance of the matrix exponential, consider the damped oscillator $\ddot{\boldsymbol{ x}} + C\dot{ \boldsymbol{x}} + K \boldsymbol{x} = 0$,
written in state-space form $\dot{\mathbf v} = A\,\mathbf v$ where, for a one-dimensional system with $C = 2\gamma$ and $K = \omega_0^2$,
\begin{align}
A(\gamma,\omega_0) =
\begin{pmatrix} 0 & 1 \\ -\omega_0^2 & -2\gamma \end{pmatrix}.
\end{align}
The solution is given by $e^{A t}$. At \emph{critical damping} $\gamma=\omega_0$ the two eigenvalues collide at
$\lambda=-\gamma$ and $A$ becomes a single Jordan block: it is
\emph{defective}, $A+A^\dagger \nsucceq 0$ and the eigenvector matrix $S$ satisfies $\kappa(S)\to\infty$.
This is the prototypical \emph{exceptional point} (EP) of a non-Hermitian
generator, ubiquitous in open and PT-symmetric systems, photonic lattices,
and dissipative many-body models. Since our construction never references the eigenvector basis, it implements $e^{tA}$ in short-time regime at the exceptional point with exactly the same cost $\mathcal{O}\!\left(\frac{\log 1/\epsilon}{\log\log 1/\epsilon}\right)$ as in the diagonalizable case, whereas LCHS is inapplicable ($A + A^\dagger \nsucceq 0$) and the Chebyshev/Faber guarantees degrade with $\kappa(S)\to\infty$. Besides having no condition on the transformed matrix, for short-time exponentiation the method presented here has lower query complexity, extra two-qubit gates and extra ancillas compared with the rest of the methods (see Table D.1 in the Supplementary Material for a direct comparison). }

\gr{Finally, consider the wave equation $\ddot{ \boldsymbol{u}} + A \boldsymbol{u} = 0$, whose solution can be written as $u(t) = \cos(t\sqrt{A})u(0) + A^{-1/2}\sin(t\sqrt{A})\dot u(0)$. These functions are holomorphic in the complex unit disk and do not have a limit for $t \to \infty$: thus, there is no locally integrable inverse Laplace transform. These functions have recently been implemented via contour-based matrix decomposition \cite{wangQuantumSimulationNonHermitian2026a}, with a complexity that scales with the condition number and $\sqrt{1/\epsilon}$. We can use Theorem \ref{thm:qet} to directly implement these functions with complexity $ \mathcal{O}\left(\frac{\log 1/\epsilon}{\log \log 1/\epsilon}\right)$ for short times (see Supplementary Material section E).}

\gr{An LCHS-type extension that lifts the stability requirement has recently been proposed \cite{niQuantumEigenvalueTransformation2026}; however, its complexity $\mathcal{O}(S\log(S/\epsilon))$ is governed by $S \geq \max_{z\in\overline{\D}}\{\abs{f},\abs{f'},\abs{f''}\}$, and thus in the short-time regime the complexity is $\bigO(\log1/\epsilon)$, so the direct implementation via Theorem~\ref{thm:qet} remains favourable. } 

In general, we do not have a simple closed‑form expression for the Taylor series of an arbitrary function we wish to implement, so it is not straightforward to bound its convergence rate directly. For convenience, we state the following well‑known result from complex analysis (with a proof provided in the Supplementary Material section F), which turns out to be particularly useful in the application of the technique presented here.

\begin{theorem}[Taylor series convergence]%
    \label{thm:taylor-series-convergence}
    Let $f(z)$ be a function analytic and bounded in $\mathbb{D}$, and suppose $f$ admits an analytic continuation to the disk $\{ z \in \mathbb{C} : \abs{z} < R \}$ for some radius $R = 1 + r > 1$, on which $\abs{f(z)} \le M$. Then the Taylor expansion of $f$ centered at $z = 0$,
    \[
        f(z) = \sum_{k = 0}^\infty a_k z^k,
    \]
    satisfies the error estimate
    \begin{align}
        \abs{ \sum_{k = 0}^N a_k z^k - f(z) } \le \epsilon
    \end{align}
    on $\mathbb{D}$ whenever $N = \Theta(\frac{1}{r} \log \frac{M}{r\epsilon})$.
\end{theorem}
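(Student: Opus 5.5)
The natural route is Cauchy's estimates for the Taylor coefficients, followed by a geometric tail sum. Fix any radius $\rho$ with $1 < \rho < R$; to keep constants clean I take $\rho = 1 + r/2$. Since $f$ is analytic on $\{\abs{z} < R\}$ and bounded there by $M$, Cauchy's integral formula on the circle $\abs{z} = \rho$ gives
\begin{align}
    \abs{a_k} = \abs{\frac{1}{2\pi i}\oint_{\abs{z}=\rho} \frac{f(z)}{z^{k+1}}\,dz} \le \frac{M}{\rho^k}.
\end{align}
Analytic continuation ensures these are the same coefficients as the Taylor series of $f$ on $\mathbb{D}$, so the series converges to $f$ on $\mathbb{D}$.

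Next I would bound the truncation error uniformly on $\mathbb{D}$. For $\abs{z} < 1$ the geometric tail gives
\begin{align}
    \abs{\sum_{k=0}^N a_k z^k - f(z)} \le \sum_{k > N} \abs{a_k} \le M \sum_{k>N} \rho^{-k} = \frac{M\rho^{-(N+1)}}{1 - \rho^{-1}} = \frac{M \rho^{-N}}{\rho - 1}.
\end{align}
With $\rho - 1 = r/2$, this bound becomes $\frac{2M}{r}(1+r/2)^{-N}$.

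It then remains to solve $\frac{2M}{r}(1+r/2)^{-N} \le \epsilon$ for $N$. This holds as soon as
\begin{align}
    N \ge \frac{\log\bigl(2M/(r\epsilon)\bigr)}{\log(1+r/2)}.
\end{align}
In the regime of interest, $r \le 1$, we have $\log(1+r/2) \ge r/(2\log 2)\cdot\log(3/2)$, which is $\Theta(r)$ by concavity of $\log$. Hence $N = \Theta\bigl(\frac{1}{r}\log\frac{M}{r\epsilon}\bigr)$ suffices.

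The only point needing care is the choice of $\rho$, because $\abs{f}\le M$ is only guaranteed on the open disk of radius $R$. Using $\rho = R$ directly would require boundedness on the boundary circle, so an interior radius is needed; $\rho = 1+r/2$ costs only constant factors. For large $r$ the $\Theta(1/r)$ form should be read as an upper bound: there $\log(1+r/2)$ grows more slowly than $r$, and the intended regime is small $r$, which I would state explicitly. Both the Cauchy bound and the tail sum are routine.
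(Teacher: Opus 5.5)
Your proof is correct and is essentially the paper's argument: Cauchy's estimate $|a_k|\le M\rho^{-k}$ on a circle of radius $1<\rho<R$, followed by the geometric tail bound $M\rho^{-N}/(\rho-1)$. The paper instead lets $\rho\to R$, which is legitimate because the estimate holds for every $\rho<R$, so your worry about the boundary circle is moot; your fixed $\rho=1+r/2$ only costs constant factors.

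One correction to your closing remark. For large $r$ the required degree behaves like $\log\bigl(M/(r\epsilon)\bigr)/\log(1+r)$, which exceeds $\frac{1}{r}\log\frac{M}{r\epsilon}$. So the $\Theta(1/r)$ form underestimates the degree there rather than serving as an upper bound, and the statement genuinely requires $r=O(1)$, which is the restriction you rightly impose.
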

By Theorem~\ref{thm:taylor-series-convergence}, this method ensures $\bigO(\log M/\epsilon)$ copies of $A$ are sufficient to block-encode $f(A)$ up to precision $\epsilon$, whenever $f$ extends analytically across $\T$.

%\section{Conclusion}
\textit{Conclusions \textemdash}
In this Article we propose a method to apply a polynomial to the eigenvalues (or, more generally, to the Jordan blocks) of a block‑encoded square matrix~$A$. We have shown that, if the $k$-th power $U^k$ block‑encodes $A^k$, a property we call \emph{regularity}, then $P(A)$ can be obtained by simply applying $P$ as a GQSP polynomial to $U$, and that any block‑encoding can be turned into an $n$-regular one by equipping it with a quantum incrementer.

\gr{Recently proposed methods cannot be generally applied to any arbitrary square matrix (see Table \ref{tab:comparison} for a comparison). Lap-LCHS can only be applied to stable matrices, with positive semidefinite Hermitian part \footnote{Lap-LCHS cannot in general be applied to nilpotent matrices. As a simple example, consider the Jordan block $A= \begin{pmatrix}    0 & 1 \\ 0 & 0 \end{pmatrix}$, such that all eigenvalues are $0$ but $A+A^\dagger = X \nsucceq 0$}. Besides, it relies on the existence of the inverse Laplace transform for the function class, which might not generally exist (see the example of the wave equation). Chebyshev/Faber history states have a complexity and estimation guarantees that scale with $\kappa(S)$, which
diverges as the matrix approaches defectiveness. Weyl-LCHS is able to handle unstable matrices, but its complexity scales with $\abs{f(z)},\abs{f'(z)},\abs{f''(z)}$, making it inefficient for rapidly varying functions, such as the mentioned  wave-equation solution or step functions.}

\gr{Moreover, the aforementioned methods suffer from the high gate and ancilla cost of LCU, in contrast with the comparatively fewer requirements of the construction in Theorem \ref{thm:block-encoding-regularization}, of $\mathcal{O}(\log n)$ ancilla qubits and $\mathcal{O}(n\log(n/\epsilon))$ two-qubit gates, where $n$ is the maximum degree of the polynomial. Note that the ancilla count for LCHS in Table \ref{tab:comparison} refers only to the implementation of $e^{-At}$; the discretised Laplace transformation increases this cost drastically. We remark that GQSP itself requires one singly-controlled application of $U_A$ per degree---a constant-factor overhead over the uncontrolled block encoding---whereas the select oracles underlying LCU-based methods require multiply-controlled applications of $U_A$ or of its powers; the comparison in Table~\ref{tab:comparison} therefore also holds at the level of controlled queries.} 

We also note that unitaries not only block‑encode themselves, but are also $\infty$-regular, which provides an alternative viewpoint on why applying GQSP to unitaries effectively yields an eigenvalue transformation. At the same time, if we consider a block‑encoding of a matrix $A$ that is close to a unitary, then the block‑encoding will be approximately regular, since the failed branches multiply by blocks of small operator norm.

A limitation of this method lies in the rigidity of the domain: GQSP requires the construction of functions that are analytic on the \emph{entire} unit disk $\mathbb{D}$, even if we can assume that the eigenvalues of $A$ lie in a proper subset of it (for example, the stability region for the simulation of non‑unitary dynamics).  \gr{We can circumvent this problem via a conformal map $\calM(z)$ from the region in which the eigenvalues lie to the unit circle: we first construct $\calM^{-1}(A)$, and apply the function $f(\calM(z))$ to it, with $\calM(z)$ chosen such that the composite function is easier to handle. We provide an example for the exponential in the Supplementary Material section G\gr{, with a total complexity $\mathcal{O}(t^2 + \log 1/\epsilon)$ in the long-time regime.} \iffalse The extra cost is modest: for the Cayley-type map used there, $\calM_\rho^{-1}(A)$ requires inverting $A + I$, whose condition number is $\mathcal{O}(1)$ on the relevant domain, so QSVT implements it with $\mathcal{O}(\log(1/\epsilon))$ additional queries. \fi}
 
\begin{acknowledgments}
\textit{Acknowledgments \textemdash} The authors would like to thank Dong An for important clarifications on~\cite{jiangContourintegralBasedQuantum2026} and to acknowledge insightful discussion with Mario Berta. LL is supported by the Swiss National Science Foundation (SNSF), project No.\ 200020-214808.  XG and MS thank support from the Basque Government BasQ initiative under the Q-STREAM project. They also acknowledge support from OpenSuperQ+100 (Grant No. 101113946) of the EU Flagship on Quantum Technologies, from Project Grant No. PID2024-156808NB-I00 and Spanish Ram\'on y Cajal Grant No. RYC-2020-030503-I funded by MI- CIU/AEI/10.13039/501100011033 and by “ERDF A way of making Europe” and “ERDF Invest in your Future”, and from the Spanish Ministry for Digital Transformation and of Civil Service of the Spanish Government through the QUANTUM ENIA project call Quantum Spain, and by the EU through the Recovery, Transformation and Resilience Plan–Next Generation EU within the framework of the Digital Spain 2026 Agenda.
\end{acknowledgments}

\bibliography{refs}

%\tableofcontents
\pagebreak
\widetext

\section{Supplementary Material}
\appendix

\section{A. Robustness of eigenvalue transformation (proof of Lemma 4)}

\begin{proof}
    For a given polynomial of degree $n$ $P(z) = \sum_{k=0}^n \alpha_k z^k$, if $U$ is block-encoding $\Tilde{A}$ exactly, then by linearity and the definition of $n$-regular encoding it block-encodes $P(\Tilde{A})$ exactly. The error in the block-encoding is bounded by
    \begin{align}
        & \norm{P(\Tilde{A}) - P(A)} \le \sum_{k = 0}^n \abs{\alpha_k} \cdot \norm{\Tilde{A}^k - A^k} \leq \sum_{k=0}^n k \abs{\alpha_k} \epsilon \ .
    \end{align}
    We use Parseval's identity and obtain that $\sum_{k=0}^n \abs{\alpha_k}^2 = (2\pi)^{-1}\int_0^{2\pi}\abs{P(e^{i\theta})}^2 d\theta \leq 1$. From the Cauchy-Schwarz inequality we obtain
    \begin{align}
        \sum_{k=0}^n k\abs{\alpha_k} & \leq \sqrt{\sum_{k=0}^n k^2}\sqrt{\sum_{k=0}^n \abs{\alpha_k}^2} \leq \sqrt{\frac{n(n+1)(2n+1)}{6}}\sim \sqrt{n^3/3} \ .
    \end{align}
    which provides the claimed bound on the error.
\end{proof}

\section{B. Polynomial Transformation of Jordan blocks}
% For A = SJS^{-1}, P(A) = SP(J)S^{-1}, where P(J) in terms of P'(lambda), P''(lambda),...
Any square matrix $A$ can be written in Jordan normal form as $A = SJS^{-1}$ for some invertible matrix $S$, where $J = \bigoplus_i J_i$ is the diagonal of its Jordan blocks $J_i$ of the form:
\begin{align}
    J_i = \begin{pmatrix}
        \lambda_i & 1 & & \\
        & \lambda_i & 1 & \\
        & & \lambda_i & & \\
        & & & \ddots & 1 \\
        & & & & \lambda_i
    \end{pmatrix},
\end{align}
that is, the entries of its diagonal contain the eigenvalue $\lambda_i$, it has $1$s in its superdiagonal and $0$s everywhere else. A matrix being diagonalizable means that each Jordan block is $1 \times 1$.

Applying a polynomial to the matrix $A$ corresponds to a polynomial transformation of its Jordan blocks, as $P(A) = S P(J) S^{-1}$, where $P(J) = \diag(P(J_i))$. The polynomial transformation of a single Jordan block $J_i$ ends up being an upper triangular Toeplitz matrix containing the derivatives $P^{(k)}(\lambda)/k!$ on the $k$-th upper diagonal:
\begin{align}\label{polynomial_Jordan}
    P(J_i) = \begin{pmatrix}
        P(\lambda_i) & P'(\lambda_i) & \frac{P''(\lambda_i)}{2!} & \cdots & \frac{P^{(d)}(\lambda_i)}{d!} \\
        & P(\lambda_i) & P'(\lambda_i) & \cdots & \frac{P^{(d-1)}(\lambda_i)}{(d-1)!} \\
        && P(\lambda_i) && \frac{P^{(d-2)}(\lambda_i)}{(d-2)!} \\
        &&& \ddots & \vdots\\
        && & & P(\lambda_i)
    \end{pmatrix}.
\end{align}
This gives also a natural way to block-encode the Toeplitz matrix of the Taylor coefficients of a given function, in a similar fashion to the Chebyshev and Faber history states of~\cite{lowQuantumEigenvalueProcessing2024}.

As an explicit example of $2$-regular block encoding, consider an arbitrary matrix $A$, block-encoded by
\begin{align}
    U_A =
    \begin{pmatrix}
        A & B\\ C &D
    \end{pmatrix}
\end{align}
and a second-order polynomial $P(z) = (1 + z^2)/2$. In order to implement $P(A)$, we need a $2$-regular block encoding of $A$. We can easily check that generally
\begin{align}
    U_A^2 = \begin{pmatrix}
        A^2 + BC & \cdot \\ \cdot & \cdot
    \end{pmatrix}\neq \begin{pmatrix}
    A^2 & \cdot \\ \cdot & \cdot
\end{pmatrix}.
\end{align}
We apply Theorem 5 with $n = 2$, (here the incrementer $Q_2$ coincides with the Pauli $X$) and obtain the $2$-regular block encoding
\begin{align}
    U_{A,2} = \begin{pmatrix}
        A & B & 0 & 0\\
        0 & 0 & C & D \\
        0 & 0 & A & B\\
        C & D & 0 & 0
    \end{pmatrix},
\end{align}
such that
\begin{align}
    U_{A,2}^2 = \begin{pmatrix}
        A^2 & AB & BC & BD \\
   DC  & D^2 & CA & CB\\
    BC  & BD & A^2 & AB\\
    CA  & CB & DC & D^2
    \end{pmatrix}
\end{align}
is a block encoding of $A^2$. We can directly check that $U_{A,2}^3$ is not a block encoding of $A^3$. Applying (G)QSP we obtain a block encoding of
\begin{align}
P(U_{A,2}) & = (I + U_{A,2}^2)/2  = \begin{pmatrix}
    (I + A^2)/2 & \cdot \\
    \cdot & \cdot
\end{pmatrix} = \begin{pmatrix}
    P(A) & \cdot \\ \cdot & \cdot
\end{pmatrix}.
\end{align}

\section{C. Regularization (proof of Theorem 5)}

\begin{proof}
    We name the three registers in the circuit (from top to bottom) \textbf{C}ounter ancillas, \textbf{O}riginal ancillas and \textbf{S}tate space. In other words, the block-encoded matrix $A$ will act on $S$.

    Denote by $Q$ the operator consisting of the application of the incrementer $Q_n$ and the controlled application of $Q_n^\dagger$, whose action is thus
    \begin{align}
        Q \ket{i}_C \otimes \ket{j}_O =
        \begin{cases}
            \ket{i}_C \otimes \ket{j}_O & j = 0\\
            \ket{i+1}_C \otimes \ket{j}_O & j \neq 0 \ .
        \end{cases}
    \end{align}
    Let's assume for now, for simplicity, that the register $C$ has infinitely many qubits, so that the $i \rightarrow i+1$ operation is not in $\Z_n$ but over ordinary integers. The claim would follow by the fact that the construction modulo $n$ acts like the infinite one for the first $n$ applications of the new construction.

    We now show by induction that $U^k$ block-encodes $A^k$, the case $k = 1$ being trivial. Of course, if we start from the state $\ket{0}_C \ket{0}_O \ket{\psi}_S$, an application of $U^{k-1}$ can generate only terms with $\ket{j}_C$ for $j < k$. Thus, by applying the induction hypothesis, the state after $k-1$ applications can be written as
    \begin{align}
        U^{k-1} \ket{0}_C \ket{0}_O \ket{\psi}_S & = \ket{0}_C \ket{0}_O \otimes A^{k-1} \ket{\psi}_S  + \sum_{0 < j < k} \sum_{l} \ket{j}_C \ket{l}_O \ket*{\phi_{j, l}}_S
    \end{align}
    for some garbage states $\ket*{\phi_{j, l}}_S$. Suppose that now we apply the $k$-th copy of $U$: we compute the action on the two terms separately
    \begin{align}
        U \ket{0}_C \ket{0}_O \otimes A^{k-1} \ket{\psi}_S & = \ket{0}_C \ket{0}_O \otimes A^k \ket{\psi}_S + \ket{1}_C \sum_{l \neq 0} \ket{l}_O \ket*{\beta_l}_S
    \end{align}
    for some garbage states $\ket*{\beta_l}_S$. For the second term ($0 < j < k$), we have
    \begin{align}
        U \ket{j}_C \ket{l}_O \otimes \ket*{\phi_{j, l}}_S & = \ket{j}_C \ket{0}_O \otimes \ket{\gamma_j}_S + \ket{j+1}_C \sum_{l \neq 0} \ket{l}_O \otimes \ket*{\phi'_{j+1, l}}_S
    \end{align}
    for some garbage states $\ket{\gamma_j}_S, \ket*{\phi'_{j+1,l}}_S$. Notice that these terms can never return to the state $\ket{0}_C \ket{0}_O$, as $j > 0$, and $C$ can only be increased by $Q$. This proves the claim, as $\bra{0}_{CO} U^k \ket{0}_{CO} = A^k$.

    If $Q$ is modulo $n$ this is still true up to $U^n$ where some branch makes the transition $\ket{11 \cdots 1}_C \mapsto \ket{00 \cdots 0}_C$. This last $n$-th application still block-encodes $A^n$, as the transition to zero on $C$ means that $O$ is non-zero.
\end{proof}

\section{\gr{D. Comparison with other methods in the short-time regime}}

\begin{table*}[t]
\centering
\renewcommand{\arraystretch}{1.3}
\setlength{\tabcolsep}{3pt}
\footnotesize
\resizebox{\textwidth}{!}{%
\begin{tabular}{|c|c|c|c|c|c|}
\toprule
\textbf{Method} &
\textbf{Condition on $A$} &
\textbf{Queries to $U_A$} &
\textbf{Extra gates} &
\textbf{Extra ancillas} &
\textbf{Success amplitude} \\
\midrule
QSVT~\cite{gilyenQuantumSingularValue2019a}, GQSP~\cite{berryDoublingEfficiencyHamiltonian2024} &
skew-Hermitian &
$\bigO(\frac{\log \frac{1}{\epsilon}}{\log \log \frac{1}{\epsilon}})$ &
$\bigO(\frac{\log \frac{1}{\epsilon}}{\log \log \frac{1}{\epsilon}})$ &
$\bigO(1)$ &
$\Omega(1)$ \\

Faber history states~\cite{lowQuantumEigenvalueProcessing2024} &
non-normal, spectrum in region $\Omega$ &
$\bigO(\kappa(S) \log^2 \frac{1}{\epsilon})$ &
$\bigO(\poly \log \frac{1}{\epsilon})$ &
$\bigO(\log \log \frac{1}{\epsilon})$ &
$\Omega(1)$ \\

LCHS~\cite{lowOptimalQuantumSimulation2025,niQuantumEigenvalueTransformation2026} &
$\frac{A + A^\dag}{2} \succeq 0$ &
$\bigO(\log \frac{1}{\epsilon}) $&
$\bigO(\log^{5/2} \frac{1}{\epsilon} \log \log \frac{1}{\epsilon})$ &
$\bigO(\log \log \frac{1}{\epsilon})$ &
$\Omega(1)$ \\

\textbf{This work} (GQSP $+$ $n$-regular BE) &
arbitrary square &
$\bigO(\frac{\log \frac{1}{\epsilon}}{\log \log \frac{1}{\epsilon}})$ &
$\bigO(\log^2 \frac{1}{\epsilon})$ &
$\bigO(\log \log \frac{1}{\epsilon})$ &
$\Omega(1)$
\\
\bottomrule
\end{tabular}%
}
\caption{Comparison of various methods to block-encode $e^{-At}$ for $t = \bigO(1)$.}
\label{tab:comparison-exponential}
\end{table*}

Here we show how our method achieves better complexity (with respect to the desired error) in every metric when we consider the short-time regime, i.e., $t = \bigO(1)$. As shown in Table~\ref{tab:comparison-exponential}, our method achieves better query complexity to $U_A$, eluding the lower bound of~\cite{lowOptimalQuantumSimulation2025} (this of course breaks down for large $t$, see Appendix~\ref{apx:conformal-maps-2} for a method to partially overcome this problem).

At the same time, by diagonalizing the incrementers $Q_n$ in Figure~1 of the main text with a quantum Fourier transform $F_n$, also the number of extra gates is better: we decompose $Q_n = F_n \Sigma F_n^\dag$, where
\begin{align*}
    \Sigma \ket{x} = \omega_n^{x} \ket{x}
\end{align*}
is the clock matrix of dimension $n$, which is easily implemented with $\bigO(n)$ controlled rotations. As the intermediate Fourier transforms cancel out, we only need to implement one $F_n$ at the beginning and one $F_n^\dag$ at the end. The count of all the extra gates, dominated by the complexity of the two QFTs, is thus $\bigO(n^2) = \bigO(\log^2 \frac{1}{\epsilon})$~\footnote{As stated in the main text, using an approximate QFT could be helpful in the general case, but in the particular case of the exponential we do not get any asymptotic improvement.}.

Note that, on the other hand, LCHS requires super-quadratic extra gates, mainly due to the implementation of the state-preparation oracle for the LCU~\cite[Theorem 4]{lowOptimalQuantumSimulation2025}.

\section{\gr{E. Simulating second-order wave dynamics}}

In this section we show how to simulate the evolution of a second-order differential equation of the form $\frac{d^2 u}{dt^2} = -Au$. The solution is given by
\begin{align*}
    u(t) = \cos(t \sqrt{A}) u(0) + A^{-1/2} \sin(t \sqrt{A})\dot{u}(0) \ .
\end{align*}
The task here is thus implementing the block-encodings $\cos(t \sqrt{A}), A^{-1/2} \sin(t \sqrt{A})$.

We remark that these dynamics are inherently unstable, unless $A$ has only real and positive eigenvalues (in which case a QSVT would be sufficient nonetheless).

We are thus going to assume that $t$ is small. In this situation, we only need to implement the two functions
\begin{align*}
    f_1(z) = \frac{1}{2 \cosh t} \cos(t \sqrt{z}), \ \ \ f_2(z) = \frac{z^{-1/2}}{2 \cosh t} \sin(t \sqrt{z}) \ ,
\end{align*}
which have subnormalization $2\cosh(t) = \Theta(1)$ for the short-time regime $t = \bigO(1)$ we are considering. Note that $f_1/f_2$ are even/odd in $\sqrt{z}$, and thus are single-valued analytic functions of $z$. Both these functions have an analytic extension to the whole complex plane, with a simple Taylor series converging uniformly everywhere:
\begin{align*}
    f_1(z) = \frac{1}{2 \cosh t} \sum_{n=0}^{\infty} \frac{(-1)^n t^{2n}}{(2n)!} z^n, \ \ \ f_2(z) = \frac{1}{2 \cosh t} \sum_{n=0}^{\infty} \frac{(-1)^n t^{2n+1} z^n}{(2n+1)!} \ .
\end{align*}

For fixed (small) $t$, the two Taylor series have convergence rates
\begin{align*}
    \bigO\qty(\frac{\log \frac{1}{\epsilon}}{\log \log \frac{1}{\epsilon}})
\end{align*}
which is also the number of copies of $A$ required to implement either function. Note that, on the other hand, previous methods in~\cite{wangQuantumSimulationNonHermitian2026a} for wave dynamics have a $\sqrt{1/\epsilon}$ dependence.

\section{F. Criteria of convergence for Taylor series (proof of Theorem 7)}

\begin{proof}
    We can write down the $k$-th Taylor coefficient using Cauchy's integral formula. Since $f(z)$ has an analytic extension to the disk $\abs{z} < R$, we can compute the integral along a circle $\gamma$ centered at $z = 0$ of radius $\rho$, for some $1 < \rho < R$.
    \begin{align}
        c_k = \frac{1}{2\pi i} \oint_{\gamma} \frac{f(z)}{z^k} \frac{dz}{z} = \frac{1}{2\pi} \int_0^{2\pi} \frac{f(\rho e^{i\theta})}{\rho^k e^{ik\theta}} d\theta
    \end{align}
    By a triangle inequality, we get that $\abs{c_k} \le \frac{M}{\rho}^k$. Using $\abs{z}\leq 1 < \rho$, the error estimate is thus bounded by
    \begin{align}
        \abs{f(z) - \sum_{k = 0}^N c_k z^k} & \le \sum_{k = {N+1}}^{\infty} \frac{M}{\rho^k} \le \frac{M}{\rho^N (\rho-1)}
    \end{align}
    and we obtain the claimed bound by letting $\rho \rightarrow R$.
\end{proof}

\section{\gr{G. Domain manipulation through conformal maps}}
\label{apx:conformal-maps-2}

The presented method exhibits an important limitation: the approximated function is required to be analytic and bounded by a constant throughout the whole unit disk, even if the numerical range of $A$ can be safely assumed to be in a portion of it. We can overcome this limitation by a manipulating the domain of the function via a conformal map from the numerical range of $A$ to the whole unit disk. As a proof of concept, we present here a specific application of this method to the exponential function, but it could be applied to any other function and from any region.

\begin{figure}
    \centering
    \begin{tikzpicture}[>=stealth, scale=2.5]
    % Define eta directly
    \def\eta{0.2}

    % Compute z_0 and rho using pgfmath directly from eta
    \pgfmathsetmacro{\zzero}{(1 - \eta*\eta) / (2 * \eta)}
    \pgfmathsetmacro{\rho}{(1 + \eta*\eta) / (2 * \eta)}

    % Bounding box to clip large objects (like the circle arc and line)
    \clip (-2.2, -2.2) rectangle (2.2, 2.2);

    % Grid and Axes
    \draw[very thin, gray!30] (-2.0, -1.5) grid (2.0, 1.5);
    \draw[->, thick] (-1.0, 0) -- (1.8, 0) node[right] {$\R$};
    \draw[->, thick] (0, -1.5) -- (0, 1.5) node[above] {$\I$};

    % Origin label
    \node[below left] at (0,0) {$0$};

    % Unit Disk (Red, no fill)
    \draw[red, thick] (0,0) circle (1);

    % Circle C_{\eta} drawn as an arc centered at z_0
    % Using arc (start_angle:end_angle:radius)
    \draw[blue, thick] (\zzero, 0) ++(145:\rho) arc (145:215:\rho);

    % Line Re z = -\eta (Dashed)
    %\draw[dashed, black!70, thick] (-\eta, -1.6) -- (-\eta, 1.6) node[above right] {$\operatorname{Re} z = -\eta$};

    % Key Points on the Axes
    % Point -\eta
    \filldraw[black] (-\eta, 0) circle (0.8pt) node[below left] {$-\eta$};

    % Annotations/Legend
    \node[red, below] at (0.6, 0.7) {$\D$};
    \node[blue, right] at (0.3, 1.2) {$C_{\eta}$};

\end{tikzpicture}
    \caption{Visualization of $C_{\eta}$ in the complex plane for $\eta = 0.2$. Note that as $\eta \rightarrow 0$, $C_\eta$ becomes the full right half-plane. Both the center $w_0$ and the radius $\rho$ scale as $\Theta(t)$ when $\eta = 1/t$.}
    \label{fig:cayley-2}
\end{figure}

We consider the implementation of $e^{-tA}$, where we assume to have stable dynamics ($A + A^\dag \succeq 0$). Naively applying $e^{-tz}$ is not possible, as $\sup_{z \in \D} \abs{f(z)} = e^t$. We could map the unit circle to the right half plane right before applying the exponential using a conformal map $\calM(z)$. Since $e^{-t \calM(z)}$ is analytic and bounded by $1$, it could be in principle approximated by GQSP. This $\calM$ would simply be the Cayley map
\begin{align*}
    \calM_{\eta}(z) = \rho z + w_0
\end{align*}
where we take $\rho = \frac{1 + \eta^2}{2\eta}, w_0 = \frac{1 - \eta^2}{2\eta}$. This transformation maps the unit circle $\D$ to the circle $C_{\eta}$ of center $w_0$ and radius $\rho$ (see Figure~\ref{fig:cayley-2}). The property of $C_{\eta}$ we will use is that $\Re w \ge -\eta$ for every $w$. The inverse map is the Cayley map:
\begin{align*}
    \calM^{-1}_{\eta}(w) = \frac{w - w_0}{\rho}
\end{align*}
mapping $C_{\eta}$ to $\D$. In the following we will set $\eta = \frac{1}{t}$, so that
\begin{align*}
    \calM_{\eta}(z) = \frac{(t^2 + 1) z + (t^2 - 1)}{2t}, \ \ \ \ \calM_{\eta}^{-1}(w) = \frac{2tw - (t^2 - 1)}{(t^2 + 1)}
\end{align*}
The idea is as follows: we first use a simple application of LCU to produce $\calM^{-1}_{\eta}(A)$ using only one copy of $A$ and then implement the function
\begin{align*}
    f(z) = e^{-t \calM_{\eta}(z)} = e^{-\frac{t^2 - 1}{2}} \cdot e^{-\frac{t^2 + 1}{2} z}
\end{align*}
Note that $\abs{f(z)} \le e$ on $\D$, so we can implement it by incurring only a constant subnormalization. Moreover, $f$ is an entire function whose Taylor series converges with rate $\bigO(t^2 + \log 1/\epsilon)$. This is of course not optimal~\cite{lowOptimalQuantumSimulation2025}, but notice that the dependence on the error is additive rather than multiplicative. Intuitively, the quadratic complexity comes from the fact that by applying $\calM_{\eta}^{-1}$ we are deliberately subnormalizing $A$ by a factor $1/t$, which has to be compensated by $\calM_{\eta}$.

We also remark that we used very simple forms of Cayley maps (namely mapping finite circles to finite circles), but it is possible to implement more complex conformal mappings expressible through generic rational functions using inverses by QSVT.

%\bibliography{refs}

\end{document}